\newtheorem{theorem}{Theorem}[section]
\newtheorem{lemma}[theorem]{Lemma}
\newtheorem{definition}{Definition}
\newtheorem{remark}[theorem]{Remark}
\newtheorem{example}[theorem]{Example}
\begin{document}


\title{Kalman Filtering With Relays Over Wireless Fading Channels}




\author{Alex S. Leong and Daniel E. Quevedo
\thanks{This work was supported by the Australian Research Council. A preliminary version of parts of this note was presented in \cite{LeongQuevedo_ACC}.}
\thanks{A. Leong is with the Department of Electrical and Electronic Engineering, 
  University of Melbourne, Parkville, Vic. 3010, Australia, 
        {\tt\small asleong@unimelb.edu.au}. D. Quevedo is with the Faculty of Electrical Engineering and Information Technology (EIM-E), University of Paderborn, Paderborn, Germany,  {\tt\small dquevedo@ieee.org}.}
}

\maketitle

\begin{abstract} 
This note studies the use of  relays to improve the performance of Kalman filtering over packet dropping links. Packet reception probabilities are governed by time-varying fading channel gains, and the sensor and relay transmit powers. We consider  situations with multiple sensors and relays, where each relay can either forward one of the sensors' measurements to the gateway/fusion center, or perform a simple linear network coding operation on some of the sensor measurements. Using an expected error covariance performance measure, we consider optimal and suboptimal methods for finding the best relay configuration, and power control problems for optimizing the Kalman filter performance. Our methods show that significant performance gains can be obtained through the use of relays, network coding and power control, with at least 30-40$\%$ less power consumption for a given expected error covariance specification. 
\end{abstract}


\section{Introduction}
Since the seminal work of \cite{Sinopoli}, which showed the existence of a critical threshold on the packet arrival rate for stability  of the Kalman filter with i.i.d. Bernoulli packet losses, the problem of Kalman filtering over unreliable channels has received considerable attention. Extensions of \cite{Sinopoli} include the case of multiple sensors \cite{LiuGoldsmith,QuevedoAhlenJohansson}, further characterizations of the critical threshold \cite{MoSinopoli}, probabilistic notions of performance \cite{ShiEpsteinMurray}, the consideration of delays \cite{Schenato}, correlated packet losses \cite{HuangDey,QuevedoAhlenJohansson}, and power control for stability \cite{Quevedo_automatica}. 

In digital communications, channel coding is often used to improve the quality of transmissions over unreliable channels. The concept of network coding \cite{LiYeungCai,HoLun}, where in a network with many nodes, throughput can be increased by allowing intermediate nodes to perform simple operations (such as linear transformations \cite{LiYeungCai}) on its received information, has attracted much attention in recent years. Kalman filtering with power control and coding was considered in \cite{QuevedoAhlenOstergaard,QuevedoOstergaardAhlen}. The recent work \cite{QuevedoOstergaardAhlen} 
 included a study of network coding, where one could choose to utilize a relay to perform a network coding operation, and the energy tradeoffs involved. The use of relays in combating the effects of fading and increasing channel capacity has been extensively studied in wireless communications, see e.g. \cite{LanemanTseWornell,HostMadsenZhang}. Indeed, cooperative communications via the use of relays has been identified as one of the key enabling technologies for  fifth generation (5G) mobile networks \cite{RappaportHeath}. The use of a relay in control has been studied in \cite{KumarGuptaLaneman_CDC}, which showed that for the case of  a single sensor the stability region for stabilizing an unstable LTI plant can be increased in some situations, and also in applications towards control of unmanned aerial vehicles (UAVS) \cite{JohansenZolich}. 

In this note, we study remote estimation using relay nodes, and investigate what information the individual relays should send to the gateway/fusion center. In a related setup considered in \cite{QuevedoOstergaardAhlen}, the relay could only perform network coding  that linearly combines two of the sensor transmissions using an XOR operation \cite{LiYeungCai}. Here, we allow for the possibility of the relay combining multiple sensor transmissions using XOR operations \cite{Katti_XOR}, as well as the possibility of the relay  forwarding the sensors' transmissions, which can give better performance in certain situations.
The current work extends our recent contribution \cite{LeongQuevedo_ACC}, which  considered the situation with two sensors and one relay, and i.i.d. packet dropping links. In this note we generalize \cite{LeongQuevedo_ACC} to  multiple sensors and relays, and  additionally consider  time-varying packet reception probabilities governed by time-varying fading channel gains and the transmission power over this channel. 


\section{System model}
\label{model_sec}
Throughout this note, $k$ will denote the discrete time index, $i$  will be used to denote the sensor indices, and $l$ the relay index. 
The process is a discrete time linear system
\begin{equation}
\label{state_eqn}
x_{k+1}=Ax_k+w_k
\end{equation}
where $x_k \in \mathbb{R}^n$ and $\{w_k\}$ is i.i.d. Gaussian with zero mean and covariance matrix $Q > 0$.\footnote{We say that a matrix $X > 0$, if it is positive definite, and $X \geq 0 $, if it is positive semi-definite.}  The process is observed by $M$ sensors with measurements
\begin{equation}
\label{measurement_eqn}
y_{i,k}= C_i x_k + v_{i,k}, \quad i=1,\dots,M
\end{equation}
with $y_{i,k} \in \mathbb{R}, \forall i$
 and $\{v_{i,k}\}$ are i.i.d. Gaussian with zero mean and variance $R_i\geq 0$, $i=1,\dots,M$. The processes $\{v_{i,k}\}$ and $\{w_k\}$ are assumed to be mutually independent, with $(A,C)$  detectable and $(A,Q^{1/2})$ stabilizable, where $C\triangleq \textrm{col}(C_1,\dots,C_M)$. 
 
 We assume that the measurements $y_{i,k}$ have undergone source coding and can be grouped into packets of $b$ bits, with each packet short enough to be transmitted within one time step. In particular, the uniform quantizer of \cite{HuiNeuhoff} will be used here. Under the additive noise model for quantization (which in general is quite accurate for bit rates as low as three bits per sample \cite{GrayNeuhoff}), the quantized value of $y_{i,k}$ can be written as
 $$y_{i,k}^q = y_{i,k} + q_{i,k}$$
 where the quantization noise $q_{i,k}$ has variance $\delta_b \mathbb{E}[y_{i,k}^2]$, with $$\delta_b = \frac{4 b \ln 2}{3 \times 2^{2b}}$$ when using the uniform quantizer of \cite{HuiNeuhoff}. 
The measurements $y_{i,k}^q$ are transmitted over orthogonal/parallel  channels to a gateway, which will perform the Kalman filtering operation. Let $\gamma_{i,k}$, for $ i=1,\dots,M$, be random variables such that $\gamma_{i,k}=1$ if $y_{i,k}^q$ is successfully transmitted to the gateway by sensor $i$, and $\gamma_{i,k}=0$ otherwise.

Furthermore,  there exist $L$ intermediate relay nodes  that can be used to aid the transmission of the sensor measurements to the gateway. Such  situations can for instance occur in mesh networks, where nodes close to the process will make measurements of the process, while the other nodes don't make measurements but can be used to relay the sensor measurements to the gateway \cite{LanemanTseWornell}. A diagram of the system model for the case of $M=2$ sensors and $L=2$ relays is given in Fig. \ref{system_model_two_relay_new}.
\begin{figure}[htb!]
\centering 
\includegraphics[scale=0.28]{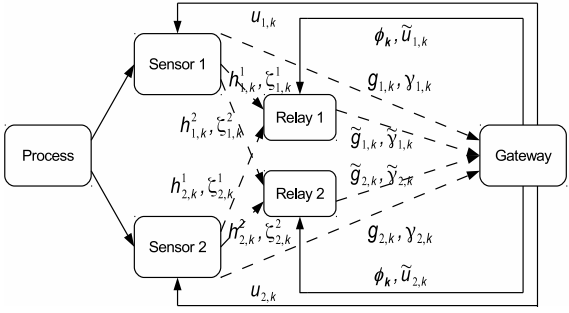} 
\caption{System model for the case of two sensors and two relays}
\label{system_model_two_relay_new}
\end{figure} 
Each relay can listen to a subset of the sensor transmissions. Denote $\mathcal{I} = \{1,\dots,M\}$ as the  set of all sensors, and  $\mathcal{I}_l \subseteq \mathcal{I}$ as the set of sensors which relay $l$ can listen to. In general the sets $\mathcal{I}_l, l=1,\dots,L$ will not necessarily be disjoint, with possibly multiple relays listening to a given sensor.  For $i \in \mathcal{I}_l$, let $\zeta_{i,k}^l$ be a random variable such that $\zeta_{i,k}^l=1$ if the transmission at time $k$ of sensor $i$ is received by relay $l$, and $\zeta_{i,k}^l=0$ otherwise. 
The relays can perform some simple local processing before transmitting over orthogonal channels to the gateway. Let $\tilde{\gamma}_{l,k}$ for $l=1,\dots,L$ be  random variables such that $\tilde{\gamma}_{l,k}=1$ if transmission at time $k$ from  relay $l$ to the gateway is successful, and $\tilde{\gamma}_{l,k}=0$ otherwise. 

In this note we will consider a few simple operations that the relay can perform.\footnote{We assume limited computational power at the relays, thus only simple operations at a bit-level are considered in this note. If, however, additional computational capability is available, then other possibilities include the use of more involved network coding schemes \cite{HoLun} or the computation of local state estimates at the relays \cite{XuHespanha}.} A relay can either: 
a) listen to one of the sensors' transmissions, say sensor $i$, and  forward $y_{i,k}^q$ if it is successfully received by the relay, or 
b) listen to a number  of the sensors' transmissions, say sensors $i_1, i_2,\dots,i_l$, and send $y_{i_1,k}^q \oplus y_{i_2,k}^q \oplus \dots \oplus y_{i_l,k}^q$ if $y_{i_1,k}^q, y_{i_2,k}^q, \dots y_{i_l,k}^q $ have all been successfully received by the relay, where $\oplus$ is the XOR operation. The XOR operation is commonly used in network coding \cite{LiYeungCai,Katti_XOR}. For instance, if the gateway receives both $y_{i,k}^q$ and $y_{i,k}^q \oplus y_{j,k}^q$, or both $y_{j,k}^q$ and $y_{i,k}^q \oplus y_{j,k}^q$, then the gateway can recover both $y_{i,k}^q$ and $y_{j,k}^q$ with a binary subtraction. In general, given the transmissions received at the gateway, the measurements which can be recovered can be determined using Gaussian elimination over $\mathbb{Z}_2$.\footnote{While our use of the XOR operation is similar to network coding, our objectives are not exactly the same. In network coding transmissions are often regarded as ``successful'' only if all packets arrive (eventually) at their intended destinations, whereas in our problem even if some packets are not received one still can perform state estimation using the available measurements.}  Determining which sensor/s each relay listens to, and which operation each relay uses, is one of the key questions addressed in this work, see Section \ref{optimal_relay_sec}. We define a \emph{relay configuration} 
$$\bm{\phi}_k = (\phi_{1,k},\dots,\phi_{L,k})$$ 
at time $k$ as the set of operations $\phi_{l,k}$ that each relay uses at time $k$. The set of all possible relay configurations  will be denoted by $\bm{\Phi}$.

The communication channels will be modelled as time-varying fading channels \cite{Proakis}. We let $g_{i,k}, i=1,\dots,M$ be  the channel gains at time $k$ from the sensor $i$ to the gateway, $\tilde{g}_{l,k}, l=1,\dots,L$ the channels gains from  relay $l$ to the gateway, and $h_{i,k}^l, i \in \mathcal{I}_l, l=1,\dots,M$ the channel gains from  sensor $i$ to relay $l$. We use the block fading model \cite{CaireTariccoBiglieri} and assume that $\{g_{i,k}\}, \{\tilde{g}_{l,k}\}, \{h_{i,k}^l\}$ vary over time $k$ in an i.i.d. manner, with the processes being mutually independent.   Denote the transmit powers at time $k$ of the sensors and relays   by $u_{i,k}, i=1,\dots,M$ and  $\tilde{u}_{l,k}, l=1,\dots,L$ respectively. Following the model of \cite{Quevedo_automatica}, the packet reception probabilities will depend on both the channel gains and transmit powers as follows: We have 
$\lambda_{i,k} \triangleq \mathbb{P}(\gamma_{i,k}=1|g_{i,k},u_{i,k}) = p(g_{i,k} u_{i,k})$ as the time-varying\footnote{This differs from the i.i.d. Bernoulli packet drop model with constant packet reception probabilities considered in e.g.\cite{Sinopoli}} packet reception probabilities from sensor $i$ to the gateway, $ \tilde{\lambda}_{l,k} \triangleq \mathbb{P}(\tilde{\gamma}_{l,k}=1|\tilde{g}_{l,k}, \tilde{u}_{l,k}) = p(\tilde{g}_{l,k} \tilde{u}_{l,k})$ the  probabilities from relay $l$ to the gateway, and 
$ \rho_{i,k}^l \triangleq \mathbb{P}(\zeta_{i,k}^l=1|h_{i,k}^l, u_{i,k}) = p(h_{i,k}^l u_{i,k})$  the  probabilities from sensor $i$  to relay $l$. Here $p(.): [0,\infty) \rightarrow [0,1]$ is a continuous  monotonically increasing function whose form depends on the particular digital  modulation and coding scheme being used \cite{Proakis}. For example, in the case of uncoded binary phase shift keying (BPSK) transmission  with $b$ bits per packet,  $p(.)$  would take  the form
\begin{equation}
\label{f_BPSK}
p(g u) = \left( \int_{-\infty}^{\sqrt{g u}} \frac{1}{\sqrt{2 \pi}} e^{-t^2/2} dt \right)^b
\end{equation}
where we assume a packet is successfully received if all $b$ bits are succesfully received. However, if there is channel coding and/or different digital modulation schemes, $p(.)$ will take on  different forms \cite{GatsisRibeiroPappas}. 
In Table \ref{notation_table} we summarize the notation for the channel gains, packet reception random variables and packet reception probabilities for the different types of links. 
\begin{table}
\caption{Notation for different types of  links}
\centering
\begin{tabular}{|c|c|c|l|} \hline
& Channel Gain &  Packet Reception  &  Packet Reception \\ 
& & Random Variable & Probability \\ \hline
Sensor $i$ to Gateway & $g_{i,k}$ & $\gamma_{i,k}$ & $\quad\quad \lambda_{i,k} $ \\ \hline
Relay $l$ to Gateway & $\tilde{g}_{i,k}$ & $\tilde{\gamma}_{i,k}$ &  $\quad\quad \tilde{\lambda}_{i,k} \phantom{\tilde{\tilde{\lambda}}_{i,k}}$ \\ \hline
Sensor $i$ to Relay $l$ & $h_{i,k}^l $ & $ \zeta_{i,k}^l$ & $\quad\quad \rho_{i,k}^l$ \\ \hline
\end{tabular}
\label{notation_table}
\end{table}

In addition, there are feedback links from the gateway to the sensors and relays which can be used to communicate the relay configuration $\bm{\phi}_k$ and power levels $u_{i,k}$ and $\tilde{u}_{l,k}$ to be used, see Section \ref{optimal_relay_sec} and Section \ref{power_control_sec}. 
In this note we will assume that transmissions can occur over a much faster time scale than the process (\ref{state_eqn}). Thus, delays experienced by the measurements in passing through intermediate relay nodes will be ignored. For instance, in the industrial wireless sensor networks standard WirelessHART \cite{WirelessHART_control}, transmissions between nodes would typically take around 10 ms, whereas for many estimation and control applications the process time constant might be 1 sec or more.

\begin{table}
\caption{Boolean expressions for  $\theta_{1,k}$ and $\theta_{2,k}$, for two sensors and one relay, under three types of operations}
\centering
\begin{tabular}{|c|c|c|c|} \hline
$\bm{\phi}_k$ & Forward $y_{1,k}^q$ & Forward $y_{2,k}^q$ & Send $y_{1,k}^q \oplus y_{2,k}^q$\\ \hline 
$\!\!\theta_{1,k}\!\!$ & $\!\! \gamma_{1,k} \!\vee\! (\tilde{\gamma}_{1,k} \!\wedge\! \zeta_{1,k}^1)\!\!$ & $ \gamma_{1,k}$ &  $\!\!\gamma_{1,k} \!\vee\! ( \tilde{\gamma}_{1,k} \!\wedge\! \gamma_{2,k} \!\wedge\! \zeta_{1,k}^1 \!\wedge\! \zeta_{2,k}^1 )\!\!$ \\ \hline
$\!\!\theta_{2,k}\!\!$ & $\gamma_{2,k}$ & $\!\! \gamma_{2,k} \!\vee\! (\tilde{\gamma}_{1,k} \!\wedge\! \zeta_{2,k}^1) \!\!$ &  $\!\!\gamma_{2,k} \!\vee\! (\tilde{\gamma}_{1,k} \!\wedge\! \gamma_{1,k} \!\wedge\! \zeta_{1,k}^1 \!\wedge\! \zeta_{2,k}^1 )\!\!$ \\ \hline
\end{tabular}
\label{truth_table}
\end{table}

\section{Kalman filter with packet drops and relays}
\label{KF_sec}
Let $\theta_{i,k}, i=1,\dots,M$ be random variables such that $\theta_{i,k}=1$ if $y_{i,k}^q$ can be reconstructed at the gateway, and $\theta_{i,k}=0$ otherwise.  Note that in general $\theta_{i,k}$ and $\theta_{j,k}, i \neq j$ are not independent. 
Values of $\theta_{i,k}$ for different relay configurations and combinations of $\gamma_{i,k}$,  $\tilde{\gamma}_{l,k}$, $\zeta_{i,k}^l$, can be written in Boolean algebra form. 
For example, in Table \ref{truth_table} we give the Boolean expressions for $\theta_{1,k}$ and $\theta_{2,k}$ in the case of two sensors and one relay, where we use the notation $\wedge$ to denote logical `and' and $\vee$ to denote logical `or'. 
Now define
 \begin{equation}
\label{breve_C_defn}
\begin{split}
& \breve{C}_k \triangleq \textrm{col}(\theta_{1,k} C_1,\dots,\theta_{M,k} C_M), \, \breve{y}_k \triangleq \textrm{col}(\theta_{1,k} y_{1,k}^q,\dots, \theta_{M,k} y_{M,k}^q), \\
&\hat{x}_{k+1|k}  \triangleq  \mathbb{E} [x_{k+1} | \breve{y}_0,\dots, \breve{y}_k, \breve{C}_0,\dots,\breve{C}_k] \\
&P_{k+1|k}  \!\triangleq  \!\mathbb{E} [(x_{k+1} \!-\! \hat{x}_{k+1|k})(x_{k+1} \!- \! \hat{x}_{k+1|k})^T \!| 
\breve{y}_0,\dots, \breve{y}_k, \breve{C}_0,\dots,\breve{C}_k].
\end{split}
\end{equation}
The associated Kalman filter equations which are run at the gateway can be written as (see e.g. \cite{QuevedoAhlenOstergaard})
\begin{equation}
\label{KF_eqns}
\begin{split}
\hat{x}_{k+1|k} & = A \hat{x}_{k|k-1} + K_k (\breve{y}_k - \breve{C}_k \hat{x}_{k|k-1}) \\
P_{k+1|k} & = A P_{k|k-1} A^T + Q - K_k \breve{C}_k P_{k|k-1} A^T 
\end{split}
\end{equation}
where $K_k = A P_{k|k-1} \breve{C}_k^T (\breve{C}_k P_{k|k-1} \breve{C}_k^T + \breve{R}_k)^{-1}$, with $\breve{R}_k = \textrm{diag}(\breve{R}_{1,k},\dots,\breve{R}_{M,k}) \triangleq \textrm{diag}(R_1+\delta_b \mathbb{E} [y_{1,k}^2],\dots,R_M+\delta_b \mathbb{E} [y_{M,k}^2])$, similar to \cite{QuevedoOstergaardAhlen}. 
In the sequel, we will also call $P_{k} \triangleq P_{k|k-1}$. 

\begin{remark}
As in \cite{QuevedoOstergaardAhlen}, the Kalman filter (\ref{KF_eqns}) uses all successfully reconstructed measurements, but those measurements where $\theta_{i,k}=0$ are not taken into account. 			
\end{remark}

\section{Performance of the Kalman filter with relays}
\label{performance_sec}
In Section \ref{model_sec} we have proposed that each relay can either listen to transmissions from one of the sensors which it then forwards to the gateway, or it listens to a number of sensors and performs an XOR  operation that is then sent to the gateway. We wish to investigate which operation each relay should use, and which sensors each relay should listen to, i.e. determining the relay configuration $\bm{\phi}_k$, in order to give the best performance for the Kalman filter. This section presents some preliminary results on the performance of the Kalman filter, with optimal relay configuration selection to be studied in Section \ref{optimal_relay_sec}.
For time-varying Kalman filters with packet drops, one commonly used performance measure is the expected error covariance.  In this note we consider the problem of optimal relay configuration selection in order to minimize the one step ahead expected error covariance $\mathbb{E}[P_{k+1}|P_k,\textbf{g}_k,\bm{\phi}_k]$, where 
\begin{equation}
\label{gk_vector}
\textbf{g}_k \triangleq \{g_{1,k},\dots,g_{M,k},\tilde{g}_{1,k},\dots,\tilde{g}_{L,k},h_{1,k}^1,\dots,h_{M,k}^L  \}
\end{equation}
 represents the channel gains at time $k$, which in turn will determine the packet reception probabilities $\lambda_{i,k}$,  $\tilde{\lambda}_{l,k}$, $\rho_{i,k}^l, i=1,\dots,M,l=1,\dots,L$. 
In order to compute $\mathbb{E}[P_{k+1}|P_k,\textbf{g}_k,\bm{\phi}_k]$, we will further assume that full channel state information (CSI) at the receiver is available,  so that $\textbf{g}_k$ is known at the gateway.\footnote{In practice this can be achieved using channel estimation and prediction algorithms,  see references in \cite{Quevedo_automatica,QuevedoOstergaardAhlen}.}
Define 
\begin{equation}
\label{f_defn}
\begin{split}
\!f_k(X) \!& \triangleq\! A X A^T \!\!+\! Q \!-\!  \mathbb{E}\left[ A X \breve{C}_k^T  \left(  \breve{C}_k X \breve{C}_k^T \!\!+\! \breve{R}_k \right)^{\!\!-1} \!\!\breve{C}_k X A^T \Big|\textbf{g}_k, \bm{\phi}_k \right]
\end{split}
\end{equation}
where the expectation is with respect to $\theta_{1,k},\dots,\theta_{M,k}$ in the definition of $\breve{C}_k$ in (\ref{breve_C_defn}). Equivalently, we can write $f_k(X)$ as:
\begin{equation}
\label{f_defn_alternate}
\begin{split}
\!f_k(X) \!& =\! A X A^T \!\!+\! Q \!-\! \sum_{I \subseteq \mathcal{I}} \! \mathbb{E}\Big[ \! \prod_{i \in I} \Theta_{i,k} \! \prod_{j \notin I} (1- \Theta_{j,k}) \big|\textbf{g}_k, \bm{\phi}_k \!\Big]\! A X \bar{C}(I)^T 
\\ & \quad \times 
\left(  \bar{C}(I) X \bar{C}(I)^T \!\!+\! \bar{R}_k(I) \right)^{\!\!-1} \!\!\bar{C}(I) X A^T 
\end{split}
\end{equation}
where $\bar{C}(I) = \textrm{col}(\{C_i, i \in I\})$, $\bar{R}_k(I) = \textrm{diag}(\{\breve{R}_{i,k}, i \in I\})$, and $\Theta_{i,k}, i=1,\dots,M$  are random variables with the same distributions as $\theta_{i,k}$. 
The quantities $\mathbb{E}\Big[\prod_{i \in I} \Theta_{i,k} \prod_{j \notin I} (1- \Theta_{j,k}) \big|\textbf{g}_k, \bm{\phi}_k \Big] $ can be computed in terms of the packet reception probabilities $\lambda_{i,k}$,  $\tilde{\lambda}_{l,k}$, $\rho_{i,k}^l, i=1,\dots,M,l=1,\dots,L$. A systematic procedure for doing this is as follows:
\\1)
 Write out the Boolean expression 
\begin{equation}
\label{Boolean_expression}
\bigwedge_{i \in I} \theta_{i,k} \bigwedge_{j \notin I} (\neg  \theta_{j,k})
\end{equation}
where each $\theta_{i,k}$ is written as a Boolean expression,  $\neg \theta_{j,k}$ denotes the negation of the Boolean expression for $\theta_{j,k}$, and the notation $\bigwedge_{i \in I} \theta_{i,k} \triangleq \theta_{i_1,k} \wedge \dots \wedge \theta_{i_n,k} \wedge \dots$ for indices $i_n \in I$. 
\\2) Convert the Boolean expression (\ref{Boolean_expression}) into the sum of products normal form \cite{DallyHarting}. (Note that this can be done systematically.)
\\3) $\mathbb{E}\Big[\prod_{i \in I} \Theta_{i,k} \prod_{j \notin I} (1- \Theta_{j,k}) |\textbf{g}_k, \bm{\phi}_k \Big] $ is then given by taking the sum of products normal form of (\ref{Boolean_expression}), and replacing 
$\wedge$ with multiplication, $\vee$ with addition, $\gamma_{i,k}$ with $\lambda_{i,k}$, $\neg \gamma_{i,k}$ with $1-\lambda_{i,k}$, $\tilde{\gamma}_{i,k}$ with $\tilde{\lambda}_{i,k}$,  $\neg \tilde{\gamma}_{i,k}$ with $1-\tilde{\lambda}_{i,k}$, $\zeta_{i,k}^l$ with $\rho_{i,k}^l$, and $\neg \zeta_{i,k}^l$ with $1-\rho_{i,k}^l$.

By step 2), each term in the sum will correspond to a distinct entry of the truth table for $\theta_{1,k},\dots,\theta_{M,k}$, thus allowing   $\mathbb{E}\Big[\prod_{i \in I} \Theta_{i,k} \prod_{j \notin I} (1- \Theta_{j,k}) |\textbf{g}_k, \bm{\phi}_k \Big] $ to be easily calculated. 

We now give a result on how the packet reception probabilities affect the  expected error covariance $\mathbb{E}[P_{k+1}|P_k, \textbf{g}_k, \bm{\phi}_k] = f_k(P_k)$. 
First denote: 
\begin{equation*}
\begin{split}
 \mathcal{L}_{i,k} &\triangleq \{\lambda_{1,k},\dots,\lambda_{i,k},\dots,\lambda_{M,k},\tilde{\lambda}_{1,k},\dots,\tilde{\lambda}_{L,k},\rho_{1,k}^1,\dots,\rho_{M,k}^L\} \\
 \mathcal{U}_{i,k} & \triangleq \{\lambda_{1,k},\dots,\mu_{i,k},\dots,\lambda_{M,k},\tilde{\lambda}_{1,k},\dots,\tilde{\lambda}_{L,k},\rho_{1,k}^1,\dots,\rho_{M,k}^L \} 
 \end{split}
 \end{equation*}
\begin{equation*}
\begin{split}
 \tilde{\mathcal{L}}_{l,k} & \triangleq \{ \lambda_{1,k},\dots,\lambda_{M,k},\tilde{\lambda}_{1,k},\dots,\tilde{\lambda}_{l,k},\dots,\tilde{\lambda}_{L,k},\rho_{1,k}^1,\dots,\rho_{M,k}^L \} \\
 \tilde{\mathcal{U}}_{l,k} & \triangleq \{\lambda_{1,k},\dots,\lambda_{M,k},\tilde{\lambda}_{1,k},\dots,\tilde{\mu}_{l,k},\dots,\tilde{\lambda}_{L,k},\rho_{1,k}^1,\dots,\rho_{M,k}^L \} 
 \end{split}
 \end{equation*}
 \begin{equation*}
 \begin{split}
 \mathcal{R}_{i,k}^l & \triangleq \{ \lambda_{1,k},\dots,\lambda_{M,k},\tilde{\lambda}_{1,k},\dots,\tilde{\lambda}_{L,k},\rho_{1,k}^1,\dots,\rho_{i,k}^l,\dots,\rho_{M,k}^L \} \\
 \mathcal{S}_{i,k}^l & \triangleq \{ \lambda_{1,k},\dots,\lambda_{M,k},\tilde{\lambda}_{1,k},\dots,\tilde{\lambda}_{L,k},\rho_{1,k}^1,\dots,\sigma_{i,k}^l,\dots,\rho_{M,k}^L  \}
\end{split}
\end{equation*}

\begin{lemma}
\label{lambda_monotonicity_lemma}
Let $f_{\mathcal{X}_{i,k}} (.)$ be defined by $f_k(.)$ in (\ref{f_defn}) when the links have packet reception probabilities $\mathcal{X}_{i,k}$. Then, irrespective of which relay configuration is used,  $\forall i=1,\dots,M, \forall l=1,\dots,L$,
\begin{equation*}
\begin{split}
\lambda_{i,k} \leq \mu_{i,k} & \Rightarrow f_{\mathcal{L}_{i,k}}(X)  \geq  f_{\mathcal{U}_{i,k}}(X)  \\
\tilde{\lambda}_{l,k} \leq \tilde{\mu}_{l,k} & \Rightarrow f_{\tilde{\mathcal{L}}_{l,k}} (X) \geq  f_{\tilde{\mathcal{U}}_{l,k}} (X) \\
\rho_{i,k}^l \leq \sigma_{i,k}^l & \Rightarrow f_{\mathcal{R}_{i,k}^l}(X) \geq  f_{\mathcal{S}_{i,k}^l}(X)
\end{split}
\end{equation*}
\end{lemma}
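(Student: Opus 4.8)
The plan is to write $f_k(X)$ as a finite mixture of subset-Riccati maps and then reduce the claim to two elementary monotonicity facts combined through a coupling argument. Following (\ref{f_defn_alternate}), I would write $f_k(X) = \sum_{I\subseteq\mathcal{I}}\pi_k(I)\,g_I(X)$, where $\pi_k(I) \triangleq \mathbb{E}\big[\prod_{i\in I}\Theta_{i,k}\prod_{j\notin I}(1-\Theta_{j,k})\,\big|\,\textbf{g}_k,\bm{\phi}_k\big]$ is the probability that precisely the measurements indexed by $I$ are reconstructed at the gateway, and $g_I(X) \triangleq AXA^T+Q-AX\bar{C}(I)^T\big(\bar{C}(I)X\bar{C}(I)^T+\bar{R}_k(I)\big)^{-1}\bar{C}(I)XA^T$ is the one-step prediction update using exactly those measurements. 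Note $\pi_k$ depends on the link probabilities, while the maps $g_I$ do not.

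\emph{Step 1 (more measurements help).} I would first show $I\subseteq J \Rightarrow g_I(X)\ge g_J(X)$ for all $X\ge 0$. For $X>0$ and $\bar{R}_k(I)>0$ this is immediate from the information-form identity $g_I(X)=A\big(X^{-1}+\bar{C}(I)^T\bar{R}_k(I)^{-1}\bar{C}(I)\big)^{-1}A^T+Q$: enlarging $I$ only adds positive semidefinite terms to the information matrix, which increases it in the Loewner order, hence decreases its inverse, hence decreases $g_I(X)$. The general $X\ge 0$ (and possibly singular $\breve{R}_{i,k}$) case follows by replacing $X$ by $X+\epsilon$ times the identity matrix, using continuity of $g_I$, and letting $\epsilon\downarrow 0$; alternatively one may appeal to a previously established monotonicity property of the packet-drop Riccati map.

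\emph{Step 2 (monotone coupling in a single link probability).} The structural fact I would exploit is that, for every configuration in $\bm{\Phi}$, each $\theta_{i,k}$ is a \emph{monotone} Boolean function of the independent Bernoulli variables $\{\gamma_{i,k}\},\{\tilde{\gamma}_{l,k}\},\{\zeta_{i,k}^l\}$: the admissible relay operations (forwarding and XOR) contribute only conjunctions of reception events disjuncted with direct-link events, so no negation ever appears (cf.\ Table \ref{truth_table}), and at the gateway the $\mathbb{Z}_2$ Gaussian-elimination recovery set is monotone in the set of delivered packets. To compare $\mathcal{L}_{i,k}$ with $\mathcal{U}_{i,k}$ (the case $\lambda_{i,k}\le\mu_{i,k}$), couple the two models on a common probability space: draw one uniform variable $U$ and set $\gamma_{i,k}^{\mathcal{L}}=\mathbf{1}\{U\le\lambda_{i,k}\}\le\mathbf{1}\{U\le\mu_{i,k}\}=\gamma_{i,k}^{\mathcal{U}}$, while generating all remaining Bernoulli variables (which have identical laws in the two models) identically. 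Monotonicity of the Boolean functions gives $\theta_{j,k}^{\mathcal{L}}\le\theta_{j,k}^{\mathcal{U}}$ pointwise for every $j$, so the reconstructed index set $I$ under $\mathcal{L}_{i,k}$ is almost surely contained in the set $J$ under $\mathcal{U}_{i,k}$. Step 1 then yields $g_I(X)\ge g_J(X)$ almost surely, and taking expectations over the coupling gives $f_{\mathcal{L}_{i,k}}(X)=\mathbb{E}[g_I(X)]\ge\mathbb{E}[g_J(X)]=f_{\mathcal{U}_{i,k}}(X)$. The inequalities for $\tilde{\lambda}_{l,k}$ and $\rho_{i,k}^l$ follow from the identical argument, coupling instead on $\tilde{\gamma}_{l,k}$ or $\zeta_{i,k}^l$.

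The main obstacle I expect is not the coupling, which is routine, but making Step 1 rigorous in the degenerate regime where $X$ is only positive semidefinite or some $\breve{R}_{i,k}$ vanishes, where the information-form identity needs a limiting/regularization argument or a citation to an existing result. The only other point needing a (brief, routine) check is that $\theta_{i,k}$ is genuinely monotone for every configuration in $\bm{\Phi}$ and every choice of which sensors the relays listen to, which is clear from the definition of the two admissible relay operations.
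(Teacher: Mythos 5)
Your proposal is correct and follows essentially the same route as the paper: the paper's proof is exactly your Step 2 coupling (generating all Bernoulli link variables from common uniforms so that $\theta_{j,k}$ under $\mathcal{L}_{i,k}$ is dominated pathwise by $\theta_{j,k}$ under $\mathcal{U}_{i,k}$), combined with the sample-path inequality that a $\breve{C}_k$ with more active rows yields a larger correction term, which is your Step 1. The only difference is presentational: you make the ``more measurements help'' monotonicity and the monotone-Boolean-function structure explicit, whereas the paper asserts the corresponding inequality (its equation (\ref{sample_path_monotonicity})) directly.
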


\begin{proof}
Consider the case $\lambda_{i,k} \leq \mu_{i,k}$. Recall that Bernoulli random variables can be generated from $U(0,1)$ uniform random variables, by comparing the uniform random variable with the probability that the Bernoulli random variable is equal to one, i.e.. 
$\gamma_{i,k} = 1$ when $u \leq \lambda_{i,k}$, and $\gamma_{i,k}=0$ otherwise, 
where $u$ is $U(0,1)$. Let $\omega$ denote an outcome corresponding to $N$ independent realizations of $U(0,1)$ random variables, where $N$ is equal to the total number of packet dropping links. For each $\omega$, one can generate corresponding independent Bernoulli random variables $\gamma_{1,k}, \dots,\gamma_{M,k}, \tilde{\gamma}_{1,k},\dots,\tilde{\gamma}_{L,k}, \zeta_{1,k}^1,\dots,\zeta_{M,k}^L$. One can then construct the Bernoulli random variables $\theta_{1,k},\dots,\theta_{M,k}$, and hence $\breve{C}_k$ as in (\ref{breve_C_defn}). 

Let $\breve{C}_{\mathcal{L}_{i,k}}(\omega)$ be the matrix $\breve{C}_k$ when using packet reception probabilities $\mathcal{L}_{i,k}$ and $\breve{C}_{\mathcal{U}_{i,k}}(\omega)$ be the matrix $\breve{C}_k$ when using packet reception probabilities $\mathcal{U}_{i,k}$. Now note that if $\theta_{j,k}(\omega)=1$ using the packet reception probabilities $\mathcal{L}_{i,k}$, then we also have $\theta_{j,k}(\omega)=1$  when using the packet reception probabilities $\mathcal{U}_{i,k}$, by the way in which $\theta_{j,k}(\omega)$ is constructed, and  since an increase in the packet reception probability of any link cannot decrease the probability of reconstructing any of the sensor measurements. Hence
\begin{equation}
\label{sample_path_monotonicity}
\begin{split}
& A X \breve{C}_{\mathcal{L}_{i,k}}(\omega)^T  \left(  \breve{C}_{\mathcal{L}_{i,k}}(\omega) X \breve{C}_{\mathcal{L}_{i,k}}(\omega)^T \!\!+\! \breve{R}_k \right)^{\!\!-1} \!\!\!\breve{C}_{\mathcal{L}_{i,k}}(\omega) X A^T \\ & \geq A X \breve{C}_{\mathcal{U}_{i,k}}(\omega)^T  \left(  \breve{C}_{\mathcal{U}_{i,k}}(\omega) X \breve{C}_{\mathcal{U}_{i,k}}(\omega)^T \!\!+\! \breve{R}_k \right)^{\!\!-1} \!\!\!\breve{C}_{\mathcal{U}_{i,k}}(\omega) X A^T
\end{split}
\end{equation}
Since (\ref{sample_path_monotonicity}) holds for all $\omega$, we have
\begin{equation*}
\begin{split}
& \mathbb{E} \left[A X (\breve{C}_{\mathcal{L}_{i,k}})^T  \left(  \breve{C}_{\mathcal{L}_{i,k}} X (\breve{C}_{\mathcal{L}_{i,k}})^T \!\!+\! \breve{R}_k \right)^{\!\!-1} \!\!\!\breve{C}_{\mathcal{L}_{i,k}} X A^T \Big| \textbf{g}_k, \bm{\phi}_k \right] \\ & \quad \geq \mathbb{E}\left[A X (\breve{C}_{\mathcal{U}_{i,k}})^T  \left(  \breve{C}_{\mathcal{U}_{i,k}} X (\breve{C}_{\mathcal{U}_{i,k}})^T \!\!+\! \breve{R}_k \right)^{\!\!-1} \!\!\!\breve{C}_{\mathcal{U}_{i,k}} X A^T \Big| \textbf{g}_k, \bm{\phi}_k \right]
\end{split}
\end{equation*}
which shows that $f_{\mathcal{L}_{i,k}} (X)\geq  f_{\mathcal{U}_{i,k}}(X) $. 
The other two cases can be proved in a similar manner. 
\end{proof}

Lemma \ref{lambda_monotonicity_lemma} says that increasing the packet arrival rate on any one of the packet dropping links will give a decrease in the value of the one step ahead expected error covariance, no matter which relay configuration is used.

We also wish to determine how the system parameters such as $C_1,\dots, C_M, R_1, \dots, R_M$ will affect the values of the expected error covariance. Denote:
\begin{equation*}
\begin{split}
\mathcal{A}_{i} & \triangleq \{C_1,\dots,C_i,\dots,C_M,R_1,\dots,R_i,\dots,R_M\} \\
\mathcal{B}_{i} & \triangleq \{C_1,\dots,D_i,\dots,C_M,R_1,\dots,S_i,\dots,R_M\}
\end{split}
\end{equation*}
\begin{lemma}
\label{SNR_monotonicity_lemma}
Let $f_{\mathcal{Y}_{i},k} (X)$ be defined by $f_k(.)$ in (\ref{f_defn_alternate}) when the sensors have parameters $\mathcal{Y}_{i}$. Then, irrespective of which relay configuration is used, $\forall i=1,\dots,M$, 
\begin{equation*}
\begin{split}
& C_i^T \breve{R}_{i,k}^{-\!1} C_i \leq D_i^T \breve{S}_{i,k}^{-\!1} D_i  \Rightarrow f_{\mathcal{A}_{i}}(X)  \geq f_{\mathcal{B}_{i}}(X)
\end{split}
\end{equation*}
\end{lemma}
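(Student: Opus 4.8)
The plan is to pass to the information form of the Kalman update, where each admitted measurement enters additively through $C_i^T\breve{R}_{i,k}^{-1}C_i$, and then to exploit the fact that the combinatorial weights in $f_k$ do not depend on the sensor parameters at all, so that only these additive terms change between the two scenarios.

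First I would apply the matrix inversion lemma to each summand of the alternative expression (\ref{f_defn_alternate}). Writing $p_k(I)\triangleq\mathbb{E}\big[\prod_{i\in I}\Theta_{i,k}\prod_{j\notin I}(1-\Theta_{j,k})\big|\textbf{g}_k,\bm{\phi}_k\big]$ and $\bar C(I)^T\bar R_k(I)^{-1}\bar C(I)=\sum_{i\in I}C_i^T\breve{R}_{i,k}^{-1}C_i=:\Omega_k(I)$, one has, for $X>0$,
\begin{equation*}
A X \bar C(I)^T\big(\bar C(I) X \bar C(I)^T+\bar R_k(I)\big)^{-1}\bar C(I) X A^T = A X A^T - A\big(X^{-1}+\Omega_k(I)\big)^{-1}A^T .
\end{equation*}
Since the events indexed by $I\subseteq\mathcal{I}$ partition the sample space, $\sum_{I}p_k(I)=1$, and (\ref{f_defn_alternate}) collapses to $f_k(X)=Q+\sum_{I\subseteq\mathcal{I}}p_k(I)\,A\big(X^{-1}+\Omega_k(I)\big)^{-1}A^T$. (For singular $X\ge 0$ the same conclusion follows by applying the argument to $X+\epsilon I$ and letting $\epsilon\downarrow 0$, using continuity of $f_k$.)

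Next I would use the crucial fact that $p_k(I)$ is a function only of the packet reception probabilities $\lambda_{i,k},\tilde\lambda_{l,k},\rho_{i,k}^l$ (via the Boolean expressions for the $\theta_{i,k}$), and these depend on the channel gains and transmit powers alone; hence the weights $p_k(I)$ are identical under parameter sets $\mathcal{A}_i$ and $\mathcal{B}_i$, and only $\Omega_k(I)$ changes. Denoting the two versions $\Omega_k^{\mathcal{A}}(I)$ and $\Omega_k^{\mathcal{B}}(I)$, they coincide when $i\notin I$, and when $i\in I$ we have $\Omega_k^{\mathcal{B}}(I)-\Omega_k^{\mathcal{A}}(I)=D_i^T\breve{S}_{i,k}^{-1}D_i-C_i^T\breve{R}_{i,k}^{-1}C_i\ge 0$ by hypothesis; so in every case $\Omega_k^{\mathcal{A}}(I)\le\Omega_k^{\mathcal{B}}(I)$ in the L\"owner order. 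Since matrix inversion is operator-antitone on the positive definite cone and congruence by $A$ preserves that order, $A\big(X^{-1}+\Omega_k^{\mathcal{A}}(I)\big)^{-1}A^T\ge A\big(X^{-1}+\Omega_k^{\mathcal{B}}(I)\big)^{-1}A^T$ for every $I$; multiplying by the common nonnegative weight $p_k(I)$, summing over $I$, and adding $Q$ yields $f_{\mathcal{A}_i}(X)\ge f_{\mathcal{B}_i}(X)$.

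The computation is routine once the reduction is set up; the one point that genuinely needs care — and that I would state explicitly — is the claim that the weights $p_k(I)$ are independent of the sensor parameters, which holds precisely because $\breve{R}_{i,k}$ and $\breve{S}_{i,k}$ enter the lemma only as the given effective measurement-noise variances while the law of the $\Theta_{i,k}$ is dictated solely by $\textbf{g}_k$, $\bm{\phi}_k$ and the powers (and this is exactly what makes the result hold irrespective of the relay configuration). As in the proof of Lemma \ref{lambda_monotonicity_lemma}, one could instead argue pathwise: couple the randomness so that $\theta_{1,k},\dots,\theta_{M,k}$ take the same values under both parameter sets, observe that on each outcome the information matrix $X^{-1}+\sum_{i:\theta_{i,k}=1}(\cdot)$ is no smaller under $\mathcal{B}_i$, and take expectations.
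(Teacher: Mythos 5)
Your argument is correct and complete. Note that the paper itself omits the proof of this lemma (it only points to a related result in \cite{LeongQuevedo_ACC}), so there is no in-paper proof to compare against; judged on its own terms, your reduction is the natural one and all the steps check out. The information-form identity $X\bar{C}(I)^T\bigl(\bar{C}(I)X\bar{C}(I)^T+\bar{R}_k(I)\bigr)^{-1}\bar{C}(I)X = X-\bigl(X^{-1}+\Omega_k(I)\bigr)^{-1}$ is valid since $\bar{R}_k(I)>0$, the weights $p_k(I)$ do sum to one and depend only on the packet reception probabilities (hence are untouched by swapping $(C_i,\breve{R}_{i,k})$ for $(D_i,\breve{S}_{i,k})$), and operator antitonicity of inversion plus congruence by $A$ finishes it; your $\epsilon$-perturbation remark correctly covers singular $X$. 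The one stylistic difference worth noting is that the paper's proof of Lemma \ref{lambda_monotonicity_lemma} is a sample-path coupling over the Bernoulli outcomes, and your closing remark correctly identifies how the analogous pathwise version would go here: couple so that the $\theta_{i,k}$ realizations coincide under $\mathcal{A}_i$ and $\mathcal{B}_i$ (they have identical laws, so this is trivial), and apply the Löwner comparison outcome by outcome before taking expectations. Your global version via the collapsed form $f_k(X)=Q+\sum_I p_k(I)\,A\bigl(X^{-1}+\Omega_k(I)\bigr)^{-1}A^T$ buys a cleaner statement of exactly where the sensor parameters enter, and makes the ``irrespective of relay configuration'' claim transparent, since $\bm{\phi}_k$ affects only the weights $p_k(I)$ and not the terms being compared.
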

The proof of Lemma \ref{SNR_monotonicity_lemma} is omitted; a related result is proved in  \cite{LeongQuevedo_ACC}.
The quantity $C_i^T \breve{R}_{i,k}^{-1} C_i$ can be regarded as the signal-to-noise ratio (SNR) for sensor $i$ (which includes the quantization noise contribution). 
Thus Lemma \ref{SNR_monotonicity_lemma} shows that the expected error covariance is monotonic in the SNRs $C_i^T \breve{R}_{i,k}^{-1} C_i$.

\section{Optimal relay configuration selection}
\label{optimal_relay_sec}
We now wish  to address the question of determining which configurations for the relays will give the best Kalman filter performance.  
Suppose for now that the sensor transmit powers $u_{i,k},i=1,\dots,M$ and relay transmit powers $\tilde{u}_{l,k}, l=1,\dots,L$ are given or fixed (The more difficult problem of jointly optimizing the relay configuration and transmission powers will be considered in Section \ref{power_control_sec}.).  
We wish to choose at each time instant $k$, the relay configuration $\bm{\phi}^*_k$ that minimizes $\textrm{Tr} \mathbb{E}[P_{k+1}|P_k,  \textbf{g}_k,\bm{\phi}_k]$, i.e.
\begin{equation}
\label{optimal_relay_prob}
\bm{\phi}^*_k = \textrm{arg} \!\!\!\!\!\!\!\!\min_{\!\!\!\bm{\phi}_k(P_k,\textbf{g}_k) \in \bm{\Phi}} \! \textrm{Tr} \mathbb{E}[P_{k+1}|P_k,  \textbf{g}_k,\bm{\phi}_k].
\end{equation}
where $\mathbb{E}[P_{k+1}|P_k, \textbf{g}_k,\bm{\phi}_k] = f_k(P_k)$, see (\ref{f_defn_alternate}).

\subsection{Optimal relay configuration selection}
\label{exhaustive_search_sec}
 Problem (\ref{optimal_relay_prob}) can, in principle, be solved by exhaustive search at the gateway. The optimal configuration can then be fed back to the relays. 
 We will characterize the number of  relay configurations that need to be checked at each time instant for exhaustive search.
\begin{lemma}
\label{num_configs_lemma}
Let  $\mathcal{I}_l$ be the set of sensors that relay $l$ can listen to, and let $M_l = | \mathcal{I}_l| $ denote the cardinality of $\mathcal{I}_l$. Suppose that there are no restrictions on how many relays listen to the same sensor. 
Then the number of possible relay  configurations for $\bm{\phi}_k$ is
\begin{equation}
\label{num_configs}
|\bm{\Phi}| = \prod_{l=1}^L \left(2^{M_l} - 1 \right)
\end{equation}
\end{lemma}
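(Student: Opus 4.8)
The plan is to count relay configurations by treating the choices at distinct relays as independent, since each relay $l$ can only listen to sensors in $\mathcal{I}_l$ and there are no restrictions coupling the relays (multiple relays may listen to the same sensor). Thus $|\bm{\Phi}| = \prod_{l=1}^L N_l$, where $N_l$ is the number of distinct operations available to relay $l$. It then suffices to show $N_l = 2^{M_l}-1$.

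First I would enumerate the operations available to a single relay $l$ according to the two types described in Section \ref{model_sec}. Type (a): the relay forwards $y_{i,k}^q$ for some single sensor $i \in \mathcal{I}_l$; this gives $M_l$ choices. Type (b): the relay listens to a subset $\{i_1,\dots,i_m\} \subseteq \mathcal{I}_l$ of size $m \geq 2$ and transmits the XOR $y_{i_1,k}^q \oplus \dots \oplus y_{i_m,k}^q$; the number of such subsets is $\sum_{m=2}^{M_l} \binom{M_l}{m}$. Observe that a type-(a) operation is exactly the degenerate case $m=1$ of an XOR over a singleton subset, so the two types together correspond to choosing a nonempty subset of $\mathcal{I}_l$. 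Hence $N_l = \sum_{m=1}^{M_l}\binom{M_l}{m} = 2^{M_l}-1$, using the binomial identity $\sum_{m=0}^{M_l}\binom{M_l}{m}=2^{M_l}$ and excluding the empty subset (a relay must transmit something, or equivalently the empty subset corresponds to no operation). Substituting into the product gives (\ref{num_configs}).

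The only genuine subtlety — and the main point to argue carefully rather than a real obstacle — is that distinct nonempty subsets of $\mathcal{I}_l$ give genuinely distinct operations (no double counting) and that no valid operation is omitted. Distinctness is clear because the XOR of a set of measurements determines, and is determined by, that set of indices at the bit-combining level, and a singleton XOR is just a forward; omission does not occur because Type (a) and Type (b) exhaust the allowed relay behaviors by the definitions in Section \ref{model_sec}. One should also note the independence across relays relies precisely on the hypothesis that there is no bound on how many relays listen to a common sensor: without it, the per-relay choices would be coupled and the product formula would fail. With these observations the count is immediate.
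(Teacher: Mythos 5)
Your proof is correct and follows essentially the same route as the paper: count the operations of a single relay as $M_l + \sum_{m=2}^{M_l}\binom{M_l}{m} = 2^{M_l}-1$ (equivalently, nonempty subsets of $\mathcal{I}_l$), then apply the multiplication principle across relays using the no-restriction hypothesis. The additional remarks on distinctness and exhaustiveness are sound but not needed beyond what the paper already states.
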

\begin{proof}
First fix a relay $l$, which can listen to $M_l$ of the sensors. This relay can either forward any one of the sensor transmissions, or perform the XOR operation on two or more of the sensor transmissions it listens to, resulting in $M_l +  \binom{M_l}{2} + \binom{M_l}{3} + \dots + \binom{M_l}{M_l} = 2^{M_l}-1 $ possible operations. If there are no restrictions on multiple relays listening to the same sensor, then by the multiplication principle  the number of relay  configurations is
$\prod_{l=1}^L \left( 2^{M_l}-1 \right)$.
\end{proof}
 We thus see that the number of configurations that needs to be checked  is in the worst case (where each relay can listen to all sensors)  exponential in the number of relays $L$ and number of sensors $M$. However, in practice, due to geographical considerations, the number of sensors $M_l$ that each sensor $l$ listens to is often small, e.g. in \cite{TangWang_VTC,RajalinghamHoLeNgoc} it is assumed that $M_l \leq 3$. 

\subsection{Stability of Kalman filtering with relay configuration selection}
We now wish to give a condition for stability of the Kalman filter with optimal relay configuration selection. 
\begin{definition}
The Kalman filter is said to be exponentially bounded if there exist finite constants $\alpha$ and $\beta$, and $\rho \in [0,1)$, such that 
$\textnormal{Tr}\mathbb{E}[ P_k] \leq \alpha \rho^k + \beta, \quad \forall k $.
\end{definition}

\begin{theorem}
\label{stability_thm}
Let  $\{s_k\}$ be a stochastic process such that 
$s_k = 1 $ if $\breve{C}_k$ is full rank, and $s_k= 0 $ otherwise. 
Suppose  there exists a  policy $\bm{\phi}^\sharp(\textbf{g}_k)$  dependent only on $\textbf{g}_k$, such that 
\begin{equation}
\label{stability_condition}
||A||^{2}  \mathbb{E}[\mathbb{P}(s_k=0|\textbf{g}_k, \bm{\phi}^\sharp (\textbf{g}_k))] < 1, 
\end{equation}
where $||A||$ is the spectral norm of $A$. Then 
the Kalman filter  using the  configurations obtained from problem (\ref{optimal_relay_prob})  is exponentially bounded. 
\end{theorem}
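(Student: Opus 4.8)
The plan is to dominate the performance of the optimal relay configuration by that of the fixed (generally suboptimal) policy $\bm{\phi}^\sharp$ of the hypothesis, for which (\ref{stability_condition}) already provides a one-step contraction. Two ingredients are needed: a pointwise one-step bound that separates the ``good'' event $\{s_k=1\}$ --- on which the error covariance is driven down to a fixed matrix --- from the ``bad'' event $\{s_k=0\}$ --- on which it can only grow by the factor $\|A\|^2$; and the optimality of $\bm{\phi}^*_k$ in (\ref{optimal_relay_prob}), which lets one replace $\bm{\phi}^*_k$ by $\bm{\phi}^\sharp(\textbf{g}_k)$ inside the one-step conditional expectation at the cost of an inequality in the favourable direction.

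For the one-step bound, I would first note that $\breve{C}_k$ takes values in a finite set, one sparsity pattern per subset $I\subseteq\mathcal{I}$ of reconstructed measurements. On $\{s_k=1\}$, i.e.\ $\breve{C}_k$ of full column rank, the information form of the update in (\ref{KF_eqns}) gives $P_{k+1|k}=A\big(P_{k|k-1}^{-1}+\breve{C}_k^T\breve{R}_k^{-1}\breve{C}_k\big)^{-1}A^T+Q\leq A\big(\breve{C}_k^T\breve{R}_k^{-1}\breve{C}_k\big)^{-1}A^T+Q$; since there are only finitely many full-rank patterns and $\breve{R}_k$ is uniformly bounded, $\breve{C}_k^T\breve{R}_k^{-1}\breve{C}_k$ is bounded below by a positive definite matrix, so $P_{k+1|k}\leq\bar{P}$ for a constant matrix $\bar{P}$ independent of $P_{k|k-1}$ and of the configuration. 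On $\{s_k=0\}$, discarding the positive semidefinite correction term in (\ref{KF_eqns}) gives the crude bound $P_{k+1|k}\leq AP_{k|k-1}A^T+Q$. Taking traces, using $\textrm{Tr}(AXA^T)\leq\|A\|^2\,\textrm{Tr}(X)$ for $X\geq0$, and averaging over $\breve{C}_k$ under $\bm{\phi}^\sharp(\textbf{g}_k)$ yields, writing $\bm{\phi}^\sharp$ for $\bm{\phi}^\sharp(\textbf{g}_k)$,
\[
\textrm{Tr}\,\mathbb{E}\big[P_{k+1}\,\big|\,P_k,\textbf{g}_k,\bm{\phi}^\sharp\big]\leq\textrm{Tr}\,\bar{P}+\textrm{Tr}\,Q+\|A\|^2\,\mathbb{P}\big(s_k=0\,\big|\,\textbf{g}_k,\bm{\phi}^\sharp\big)\,\textrm{Tr}\,P_k .
\]

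To close the recursion, set $t_k\triangleq\mathbb{E}[\textrm{Tr}\,P_k]$. By the tower property and the optimality of $\bm{\phi}^*_k$ in (\ref{optimal_relay_prob}), $t_{k+1}=\mathbb{E}\big[\textrm{Tr}\,f_k(P_k)\big]\leq\mathbb{E}\big[\textrm{Tr}\,\mathbb{E}[P_{k+1}\,|\,P_k,\textbf{g}_k,\bm{\phi}^\sharp]\big]$, where $f_k$ on the left is evaluated at the optimal configuration. Now $P_k=P_{k|k-1}$ is a function of the channel gains and Bernoulli variables up to time $k-1$, hence independent of $\textbf{g}_k$ (block fading, i.i.d.\ over time), while $\mathbb{P}(s_k=0\,|\,\textbf{g}_k,\bm{\phi}^\sharp(\textbf{g}_k))$ depends on $\textbf{g}_k$ alone; taking expectations in the displayed bound therefore factorizes the last term and gives $t_{k+1}\leq c+\rho\,t_k$, with $c\triangleq\textrm{Tr}\,\bar{P}+\textrm{Tr}\,Q$ and $\rho\triangleq\|A\|^2\,\mathbb{E}[\mathbb{P}(s_k=0\,|\,\textbf{g}_k,\bm{\phi}^\sharp(\textbf{g}_k))]$, which is strictly less than $1$ by (\ref{stability_condition}) and does not depend on $k$. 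Iterating gives $t_k\leq\rho^k t_0+c/(1-\rho)$, the claimed bound with $\alpha=\textrm{Tr}\,P_0$, $\beta=c/(1-\rho)$ and $\rho\in[0,1)$.

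The main obstacle is not the arithmetic but the legitimacy of propagating a per-step inequality that is only claimed for the suboptimal $\bm{\phi}^\sharp$ along trajectories generated by the optimal policy: this is precisely what the tower property together with the pointwise optimality of (\ref{optimal_relay_prob}) delivers, and it is essential that $\bm{\phi}^\sharp$ depend on $\textbf{g}_k$ only, so that $\mathbb{P}(s_k=0\,|\,\textbf{g}_k,\bm{\phi}^\sharp(\textbf{g}_k))$ is independent of $P_k$ and the factorization above is valid. A secondary technical point is the uniform-in-$k$ constant $\bar{P}$ on the good event, which rests on the finiteness of the family of full-rank sparsity patterns of $\breve{C}_k$ and on a uniform bound for $\breve{R}_k$.
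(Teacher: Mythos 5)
Your argument is essentially the paper's own proof: the same use of the optimality of $\bm{\phi}^*_k$ in (\ref{optimal_relay_prob}) to dominate by the fixed policy $\bm{\phi}^\sharp(\textbf{g}_k)$, the same split over $\{s_k=1\}$ and $\{s_k=0\}$, the same exploitation of the independence of $P_k$ and $\textbf{g}_k$ to factorize the contraction factor, and the same geometric recursion $t_{k+1}\leq c+\rho t_k$. The only difference is that you spell out the one-step bounds (information-form update on the good event, dropping the correction term on the bad event) that the paper delegates to cited references, and you correctly flag that the uniform constant on the good event rests on a uniform bound for $\breve{R}_k$ --- a point the paper also tacitly assumes.
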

\begin{proof}
Since the distribution of $\breve{C}_k$ depends on $P_k$ and $\textbf{g}_k$, and $\textbf{g}_k$ is independent in time and of $P_k$, we have
$\mathbb{P} (\breve{C}_k|P_k,P_{k-1},\dots,P_0) = \mathbb{P} (\breve{C}_k|P_k) $.
Then by (\ref{KF_eqns}), the process $\{P_k\}$ is Markovian.
Now define $V_k \triangleq \textrm{Tr} P_k$.
 Since in the relay configuration selection problem (\ref{optimal_relay_prob}) we are minimizing $\mathbb{E} \{ V_{k+1} | P_k, \textbf{g}_k, \phi_k(P_k,\textbf{g}_k) \}$,  we have
\begin{equation*}
\begin{split}
& \mathbb{E} \{ V_{k+1} | P_k\}  = \mathbb{E}[\mathbb{E} \{ V_{k+1} | P_k, \textbf{g}_k, \bm{\phi}^*_k(P_k,\textbf{g}_k)\}] \\ & \leq \mathbb{E}[\mathbb{E} \{ V_{k+1} | P_k, \textbf{g}_k, \bm{\phi}^\sharp(\textbf{g}_k)\} ]\\
& = \mathbb{E}[ \mathbb{E} \{ V_{k+1} | P_k, \textbf{g}_k, \bm{\phi}^\sharp(\textbf{g}_k), s_k=1\} \mathbb{P} \{ s_k=1|P_k, \textbf{g}_k, \bm{\phi}^\sharp(\textbf{g}_k)\} \\ &\quad +  \mathbb{E} \{ V_{k+1} | P_k, \textbf{g}_k, \bm{\phi}^\sharp(\textbf{g}_k), s_k=0\} \mathbb{P} \{ s_k=0|P_k, \textbf{g}_k, \bm{\phi}^\sharp(\textbf{g}_k)\} ]  \\
& \leq  W + \left( ||A||^2 V_k + \textrm{Tr} Q \right) \mathbb{E}[\mathbb{P} \{ s_k=0| \textbf{g}_k, \bm{\phi}^\sharp(\textbf{g}_k)\}]
\end{split}
\end{equation*}
where the last inequality is shown using similar arguments to \cite{QuevedoOstergaardAhlen,Quevedo_automatica}, and $W$ is a positive constant. 
If $||A||^{2}  \mathbb{E}[\mathbb{P}(s_k=0|\textbf{g}_k, \bm{\phi}^\sharp(\textbf{g}_k)) ]< 1$ 
we may then use a stochastic Lyapunov function argument similar to \cite{QuevedoOstergaardAhlen} to show that $
\mathbb{E} \{ V_{k} | P_0\} \leq \alpha \rho^{k} + \beta, \forall k
$
for some $\rho \in [0,1)$ and constants $\alpha$ and $\beta$, which establishes exponential boundedness of the Kalman filter. 
\end{proof}
Theorem \ref{stability_thm} thus provides a sufficient condition for Kalman filter stability dependent on the system matrix $A$ and the distributions of the channel gains $\textbf{g}_k$. 

\begin{example}
Consider the case of two sensors and one relay, with  $\breve{C}_k$ being full rank only when $\theta_{1,k}=\theta_{2,k}=1$. Then 
\begin{equation*}
\begin{split}
\mathbb{P}(\theta_{1,k}=1, \theta_{2,k}=1 | \textbf{g}_k, \bm{\phi}_k) &= \mathbb{E}[\Theta_{1,k} \Theta_{2,k} | \textbf{g}_k, \bm{\phi}_k] \\
&= \lambda_{1,k} \lambda_{2,k} +(1-\lambda_{1,k})\lambda_{2,k} \tilde{\lambda}_{1,k} \rho_{1,k}^1
\end{split}
\end{equation*} 
Suppose we choose $\bm{\phi}^\sharp$ to be the suboptimal policy that always forwards $y_{1,k}^q$, and with the transmit powers $u_{1,k}=u_{2,k}=\tilde{u}_{1,k}=1$. 
The condition (\ref{stability_condition}) then becomes 
\begin{equation*}
\begin{split}
&\mathbb{E}[\mathbb{P}(s_k=0|\textbf{g}_k, \bm{\phi}^\sharp (\textbf{g}_k))]  = \int \Big(1 \!-\! p(g_{1,k}) p(g_{2,k}) \\ & \quad -\! (1\!-\!p(g_{1,k})) p(g_{2,k}) p(\tilde{g}_{1,k}) p(h_{1,k}^1) \Big) d\mathbb{P}(\textbf{g}_k) 
< \frac{1}{||A||^2}
\end{split}
\end{equation*}
which can be checked by numerically computing the integral for specific functions $p(.)$ and probability distributions $\mathbb{P}(\textbf{g}_k)$. If condition (\ref{stability_condition}) is satisfied for this suboptimal policy, then by Theorem \ref{stability_thm} the Kalman filter using the optimal relay configurations will also be stable. 
\end{example}

\subsection{Suboptimal relay configuration selection}
\label{suboptimal_relay_sec}
Lemma \ref{num_configs_lemma} has shown that the optimal way of choosing the relay configuration by checking each configuration is exponential in the number of relays $L$, which is computationally intensive when  $L$ is large. To reduce  computational complexity, 
a suboptimal method for determining a relay configuration is to optimize the operation of each relay $l$ independently of each other. The motivation for this method is that sometimes other relays may become unavailable, thus one should optimize the performance of each relay irrespective of what the other relays are doing. Specifically, consider subsets $I_l \subseteq \mathcal{I}_l$. Let $\bar{C}(I_l) = \textrm{col}(\{C_i, i \in I_l\})$, $\bar{R}_k(I_l) = \textrm{diag}(\{\breve{R}_{i,k}, i \in I_l\})$, and 
\begin{equation*}
\begin{split}
\!f_{l,k}(X) & \triangleq\! A X A^T \!\!+\! Q \!-\! \sum_{I_l \subseteq \mathcal{I}_l} \mathbb{E}^l\Big[\prod_{i \in I_l} \Theta_{i,k} \prod_{j \notin I_l} (1- \Theta_{j,k}) | \textbf{g}_k, \phi_{l,k} \Big] 
\\ & \quad \times 
A X \bar{C}(I_l)^T \!\!\left(  \bar{C}(I_l) X \bar{C}(I_l)^T \!\!+\! \bar{R}_k(I_l) \right)^{\!\!-1} \!\!\!\bar{C}(I_l) X A^T 
\end{split}
\end{equation*}
where the terms $\mathbb{E}^l\Big[\prod_{i \in I_l} \Theta_{i,k} \prod_{j \notin I_l} (1- \Theta_{j,k}) | \textbf{g}_k, \phi_{l,k}  \Big]$ are computed assuming that relay $l$ is the only relay available. 
One then computes $f_{l,k}(P_k)$
for each of the operations $\phi_{l,k}$ that relay $l$ can perform, with the one that gives the smallest value of $\textrm{Tr}(f_{l,k}(P_k))$ then chosen. This optimization can be carried out for each relay independently of the other relays. 
The number of configurations that need to be checked at each time step $k$ is then 
$\sum_{l=1}^L \left(2^{M_l} -1 \right)$, which (compare to (\ref{num_configs}))  is no longer exponential in the number of relays $L$, and with $M_l$ usually being small as mentioned at the end of Section \ref{exhaustive_search_sec}. 


\section{A special case}
\label{special_case_sec}
Here we consider a special case where additional analytical results can be derived. We will study the effects of varying the packet reception probabilities and signal-to-noise ratios, which will provide some insight into the general behaviour.

Recall that in the case of two sensors and one relay, the relay can either i) forward $y_{1,k}$ if it is received, ii) forward $y_{2,k}$ if it is received, or iii) send $y_{1,k}^q \oplus y_{2,k}^q$ if both $y_{1,k}^q$ and $y_{2,k}^q$ are received. 
We will consider a scalar example with $\tilde{\lambda}_{1,k}=\rho_{1,k}^1=\rho_{2,k}^1=1, \forall k$, corresponding to the case where the channels from the relay to the gateway, and from the sensors to the relay, are error free. We will call $A=a$, $Q=q$, $C_1=c_1$, $C_2=c_2$, $R_1=r_1$, $R_2=r_2$, $\breve{R}_{1,k}=\breve{r}_{1,k}$, $\breve{R}_{2,k}=\breve{r}_{2,k}$. 
We have 
\begin{equation}
\label{EP_special_case}
\begin{split}
& \mathbb{E}[P_{k+1}|P_k,\textbf{g}_k,\bm{\phi}_k]  \\ &  \!\! = \left\{ \begin{array}{lcl}  \!\!a^2 P_k \!+\! q   \!-\! \frac{\lambda_{2,k} a^2 P_k^2}{P_k+\frac{1}{c_1^2/\breve{r}_{1,k}+c_2^2/\breve{r}_{2,k}}} \! -\!  \frac{(1-\lambda_{2,k}) a^2 P_k^2}{P_k + \frac{1}{c_1^2/\breve{r}_{1,k}}} & \!\!\!\!\!\!\!, & \!\!\!\!\! \textrm{forward } y_{1,k}^q \\
\!\!a^2 P_k \!+\! q \!-\!  \frac{\lambda_{1,k} a^2 P_k^2}{P_k+\frac{1}{c_1^2/\breve{r}_{1,k}+c_2^2/\breve{r}_{2,k}}} \!- \! \frac{(1-\lambda_{1,k}) a^2 P_k^2}{P_k + \frac{1}{c_2^2/\breve{r}_{2,k}}} & \!\!\!\!\!\!\!, & \!\!\!\!\!\textrm{forward } y_{2,k}^q \\
\!\! a^2 P_k \!+\! q \!-\!  \frac{(\lambda_{1,k} +\lambda_{2,k} -\lambda_{1,k} \lambda_{2,k})a^2 P_k^2}{P_k+\frac{1}{c_1^2/\breve{r}_{1,k}+c_2^2/\breve{r}_{2,k}}} & \!\!\!\!\!\!\!, & \!\!\!\!\!\textrm{send } y_{1,k}^q \!\oplus \!y_{2,k}^q
\end{array} \right.
\end{split}
\end{equation}
We want to see under what conditions the XOR operation $y_{1,k}^q\oplus y_{2,k}^q$ outperforms forwarding of measurements. First let the  terms $c_1^2/\breve{r}_{1,k}$ and $c_2^2/\breve{r}_{2,k}$, which can be regarded as the signal-to-noise ratios (SNRs) of sensors 1 and 2, be fixed. From (\ref{EP_special_case}) we see that sending $y_{1,k}^q \oplus y_{2,k}^q$ is better than forwarding $y_{1,k}^q$ when 
\begin{equation}
\label{lambda1_condition}
\frac{\lambda_{1,k}(1-\lambda_{2,k})}{P_k \!+\! \frac{1}{c_1^2/\breve{r}_{1,k}+c_2^2/\breve{r}_{2,k}}}\! >\! \frac{1-\lambda_{2,k}}{P_k \!+\! \frac{1}{c_1^2/\breve{r}_{1,k}} } \textrm{ or } \lambda_{1,k} \!>\! \frac{P_k \!+\! \frac{1}{c_1^2/\breve{r}_{1,k}+c_2^2/\breve{r}_{2,k}}}{P_k + \frac{1}{c_1^2/\breve{r}_{1,k}} }
\end{equation}
Similarly,   sending $y_{1,k}^q \oplus y_{2,k}^q$ is better than forwarding $y_{2,k}^q$ when 
\begin{equation}
\label{lambda2_condition}
\lambda_{2,k} > \frac{P_k + \frac{1}{c_1^2/\breve{r}_{1,k}+c_2^2/\breve{r}_{2,k}}}{P_k + \frac{1}{c_2^2/\breve{r}_{2,k}} }
\end{equation}
From (\ref{lambda1_condition})-(\ref{lambda2_condition}), sending $y_{1,k}^q \oplus y_{2,k}^q$ is best when both packet reception probabilities $\lambda_{1,k}$ and $\lambda_{2,k}$ are above certain thresholds, which in turn implies that the instantaneous channel gains $g_{1,k}$ and $g_{2,k}$ need to be above some thresholds. Thus, for lower quality channels, forwarding of measurements gives better performance than the network coding operation. 
The intuitive explanation for this is that when the gateway receives $y_{1,k}^q \oplus y_{2,k}^q$, it needs one other measurement (either $y_{1,k}^q$ or $y_{2,k}^q$)  in order to be useful. In contrast, if the relay forwards $y_{1,k}^q$ or $y_{2,k}^q$, this value (if received at the gateway) can be used even if the direct transmissions from the sensors are lost.
 
Alternatively, regard $\lambda_{1,k}$ and $\lambda_{2,k}$ as being fixed. Rewriting (\ref{lambda1_condition}) and (\ref{lambda2_condition}),  sending $y_{1,k}^q \oplus y_{2,k}^q$ is better than forwarding $y_{1,k}^q$ when
\begin{equation}
\label{SNR_condition1}
\frac{\lambda_{1,k}}{c_1^2/\breve{r}_{1,k}} - \frac{1}{c_1^2/\breve{r}_{1,k}+c_2^2/\breve{r}_{2,k}} > P_k(1-\lambda_{1,k}),
\end{equation}
and sending $y_{1,k}^q \oplus y_{2,k}^q$ is better than forwarding $y_{2,k}^q$ when
\begin{equation}
\label{SNR_condition2}
\frac{\lambda_{2,k}}{c_2^2/\breve{r}_{2,k}} - \frac{1}{c_1^2/\breve{r}_{1,k}+c_2^2/\breve{r}_{2,k}} > P_k(1-\lambda_{2,k})
\end{equation}
For fixed $\lambda_{1,k}$ and $\lambda_{2,k}$, we see that if either $c_1^2/\breve{r}_{1,k}$ or $c_2^2/\breve{r}_{2,k}$ is sufficiently large, then conditions (\ref{SNR_condition1}) and (\ref{SNR_condition2}) cannot both be simultaneously satisfied. 
The intuitive reason for requiring the signal-to-noise ratios to be small in order for network coding to give benefits, is that the relative performance gains by having both measurements available at the gateway (vs. just one of the measurements) is greater at low SNRs than high SNRs. 

These qualitative observations, that the XOR operation needs channel conditions to be good,  or for the signal-to-noise ratios to be low, in order to give benefits over forwarding of measurements, have been observed in simulations for more general cases of packet dropping links to and from the relays, and with larger networks. Similar behaviour has also been reported in the network coding literature \cite{RajalinghamHoLeNgoc}. 

\section{Relay configuration selection and power control}
\label{power_control_sec}
In Section \ref{optimal_relay_sec} the sensor and relay transmit powers were assumed to be fixed. However, the presence of time-varying fading channels will also allow for the use of power control techniques to further improve performance. In this section we present one possible formulation 
which optimizes the estimation performance subject to a sum of transmit powers constraint. 

As in Section \ref{optimal_relay_sec}, we assume that full channel state information (CSI) is available at the receiver, with $\textbf{g}_k$ in (\ref{gk_vector})
representing the set of all channel gains at time $k$.
 The transmit powers of the sensors and relays can then depend on both the instantaneous channel gains $\textbf{g}_k$ and the error covariance $P_k$, with these transmit powers being computed at the gateway (which is assumed to have more computational resources than the sensors and relays) and fed back to the sensors and relays before transmission occurs at the next time step. 
 Denote $\textbf{u}_k(\textbf{g}_k,P_k)  \triangleq \{u_{1,k}(\textbf{g}_k,P_k),\dots,u_{M,k}(\textbf{g}_k,P_k),
 \tilde{u}_{1,k}(\textbf{g}_k,P_k),\dots,\tilde{u}_{L,k}(\textbf{g}_k,P_k)\}$
as the set of all transmit powers at time $k$. 

\subsection{Optimal power control for a given relay configuration}
For a given relay configuration, we pose the following power control problem:
\begin{equation}
\label{one_step_problem}
\begin{split}
& \min_{ \textbf{u}_k(\textbf{g}_k,P_k) }  \textrm{Tr} \mathbb{E}[P_{k+1} | P_k, \textbf{g}_k, \bm{\phi}_k ] 
\\ & \textrm{s.t. }
 \sum_{i=1}^M u_{i,k}(\textbf{g}_k,P_k)   + \sum_{l=1}^L \tilde{u}_{l,k}(\textbf{g}_k,P_k) \leq u_{\textrm{tot}}
\end{split}
\end{equation}
which minimizes the expected one-step ahead error covariance subject to the sum power $\sum_{i=1}^M u_{i,k}(\textbf{g}_k,P_k)  + \sum_{l=1}^L \tilde{u}_{l,k}(\textbf{g}_k,P_k)$ being less than $u_{\textrm{tot}}$. 
Due to the objective  being a complicated nonlinear function of the transmit powers $\textbf{u}_k$, the optimization problem (\ref{one_step_problem}) is in general non-convex and will need to be solved numerically using global optimization algorithms \cite{LocatelliSchoen}. 

\subsection{Joint relay configuration selection and power control}
Problem (\ref{one_step_problem}) is for a given relay configuration. To optimally choose both the relay configuration and transmission powers,  we can in principle solve $\prod_{l=1}^L \left( 2^{M_l}-1 \right) $ instances of problem (\ref{one_step_problem}) at each time step (for each of the configurations, see Lemma \ref{num_configs_lemma}), and choose the relay configuration that gives the smallest value for the objective function, which however is very computationally intensive. 

A less computationally intensive suboptimal scheme is to first choose a relay configuration by assuming a simple power allocation (e.g. that the total power $u_{\textrm{tot}}$ is equally divided between the sensors and relays), and using the suboptimal method of Section \ref{suboptimal_relay_sec} to choose a relay configuration. For this chosen relay configuration, we then further optimize the transmission powers by solving the power control problem (\ref{one_step_problem}).

\section{Numerical studies}
\label{numerical_sec}
We first look at the performance differences between the optimal relay configuration selection and the suboptimal methods of Section \ref{optimal_relay_sec}. We consider a situation with two sensors and two relays, where each of the relays can listen to both sensor transmissions, see Fig. \ref{system_model_two_relay_new}.
 We consider the scalar case with $a=0.95$, $q=1$, $c_1=c_2=1$, $r_1=r_2=1$. For simplicity, we assume that the links from the sensors to the relays are perfect, with the  fading channels (from the sensors to gateway, and from the relays to gatewary) being exponentially distributed with mean 1, which models the case of Rayleigh fading \cite{Proakis}. Similar to \cite{Quevedo_automatica}, we assume that the digital communication uses BPSK transmission  with $b=6$ bits per packet, so that the function $p(.)$ in Section \ref{model_sec} has the form (\ref{f_BPSK}). We distribute the transmit powers equally between the sensors and relays. Fig. \ref{two_relay_time_varying_plot} plots the average sum power and expected error covariance $\mathbb{E}[P_k]$ (obtained by time averaging $(x_{k}-\hat{x}_{k|k-1})(x_{k}-\hat{x}_{k|k-1})^T$ over 10000 Monte Carlo iterations), for the optimal and suboptimal relay configuration selection methods. For comparison we also plot the performance for  the cases of: 1) no relay, 2) a scheme where the relay always performs the XOR operation \cite{QuevedoOstergaardAhlen}, 3) a scheme where the relay sends the two sensor measurements with less accuracy by removing half the bits\footnote{Here the relay removes from each quantized sensor measurement the least significant $b/2$ bits. Then in the Kalman filter equations $(\ref{KF_eqns})$, for the case where the direct transmissions are not successful, in the expression for  $\breve{R}_k$ we replace $\delta_b = \frac{4 b \ln 2}{3 \times 2^{2b}}$ with $\delta_{b/2}=  \frac{2 b \ln 2}{3 \times 2^{b}}$ whenever we use a measurement where half the bits have been removed.}, and 4) a scheme where the gateway can ask for each lost transmission to be retransmitted once\footnote{Here we assume that additional transmit power (same as the power for a
single transmission) is used for each retransmission, with a successfully
retransmitted measurement (from time $k$) available to the Kalman filter at time
$k+1$, which now utilizes a buffer similar to \cite{Schenato}.}.  In each case, the expected error covariance decreases as the average power is increase. Since by (\ref{f_BPSK}) larger powers imply higher packet reception probabilities, this behaviour is in agreement with Lemma \ref{lambda_monotonicity_lemma}. We also see that the suboptimal method that optimizes each relay separately gives very close performance to the optimal method, and significantly outperforms the other schemes. 
 \begin{figure}[t!]
\centering 
\includegraphics[scale=0.5]{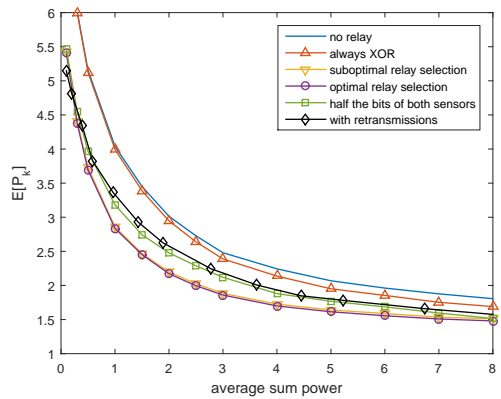} 
\caption{Optimal and suboptimal relay configuration selection}
\label{two_relay_time_varying_plot}
\end{figure} 

We next consider the case of two sensors and one relay, with Rayleigh fading for each of the fading channels. We choose $g_{1,k}$, $g_{2,k}$ to have mean $1$, while $\tilde{g}_{1,k}$, $h_{1,k}$, $h_{2,k}$ have mean $4$. This models the case where power decays in free space as $1/d^2$ where $d$ is the distance from the transmitter \cite{Proakis}, with the relay located approximately halfway between the sensors and gateway. 
 \begin{figure}[t!]
\centering 
\includegraphics[scale=0.5]{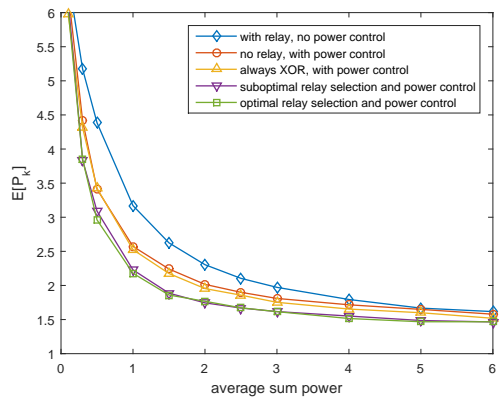} 
\caption{Power control and relay configuration selection}
\label{one_relay_time_varying_power_plot}
\end{figure} 
In Fig. \ref{one_relay_time_varying_power_plot} we plot $\mathbb{E}[P_k]$ (obtained by time averaging $(x_{k}-\hat{x}_{k|k-1})(x_{k}-\hat{x}_{k|k-1})^T$ over 10000 Monte Carlo iterations) for different sum powers, obtained by solving problem (\ref{one_step_problem})  using the  \verb=fmincon= routine in Matlab for each relay configuration and selecting the best one. We also plot the performance of the suboptimal scheme where a relay configuration is first chosen (assuming equal power allocation) and then power control is performed. 
We compare this with the case where there is no power control, with the sensors and relay using the same transmit power at all times, but with the best relay configuration chosen at each time step. 
Additionally, we plot the case where the relay always performs the XOR operation, and the case without a relay but with power control.  
We see that doing power control gives significant benefits,  with the best performance when one optimizes both the relay configuration and transmit powers. The suboptimal scheme where a relay configuration is first chosen by assuming equal power allocation, and then the powers are optimized, performs very close to the optimal scheme.  Comparing the plots using power control, we see that for a given expected error covariance the average  power required is signficantly less (at least 30-40$\%$) when a relay is used.

\section{Conclusion}
\label{conclusion_sec}
This note has studied the use of relays for Kalman filtering with multiple sensors over packet dropping links, where the packet reception probabilities are governed by fading channel gains and sensor and relay transmit powers. By allowing relays to either forward one of the sensor's measurements or perform a network coding operation, we have considered the problem of determining the optimal relay configuration at each time step, together with a simpler suboptimal method. We have also studied the use of power control in addition to selecting the best relay configuration, to further improve performance. Numerical results have demonstrated that the use of relays can lead to power savings of at least 30-40$\%$. Future work will include studies of other strategies that relays can perform, and extending our setup to  utilize relays in multi-hop networks \cite{DolzQuevedo_IFAC}.

\bibliography{IEEEabrv,kalmanrelay}
\bibliographystyle{IEEEtran} 

\end{document}